\documentclass[conference,letterpaper]{IEEEtran}

\IEEEoverridecommandlockouts
\usepackage[utf8]{inputenc} 
\usepackage[T1]{fontenc}
\usepackage{url}
\usepackage{ifthen}
\usepackage{cite}
\usepackage[cmex10]{amsmath} 
\usepackage{cite}
\usepackage{amsmath,amssymb,amsfonts}
\usepackage{algorithmic}
\usepackage{graphicx}
\usepackage{textcomp}
\usepackage{xcolor}
\usepackage{amsthm}
\newtheorem{lemma}{Lemma}

\newtheorem{theorem}{Theorem}

\usepackage{comment}
\usepackage{balance}
\begin{document}

\title{Semantic Freshness Optimal  Sampling and Transmission for Gossiping Receivers}

\author{Irtiza Hasan$\qquad$Ahmed Arafa\\Department of Electrical and Computer Engineering\\ University of North Carolina at Charlotte, NC 28223\\
\emph{ihasan@charlotte.edu}$\qquad$\emph{aarafa@charlotte.edu}
\thanks{This work was supported by the U.S. National Science Foundation under Grant ECCS 21-46099.}}

\maketitle

\begin{abstract}
We study the optimal joint sampling and transmission policy for a transmitter communicating with two gossiping receivers that share information with each other, with the objective of tracking a source under the Version Age of Information (VAoI) metric. The transmitter can observe source-version changes, but it has to pay a sampling cost to get the current source information content. Similarly, it can communicate with a receiver by paying a transmission cost. \emph{Gossiping} enables local information exchange and is able to reduce costly direct transmissions. With imperfect communication links, we formulate an infinite-horizon average-cost Markov Decision Process (MDP) to jointly minimize receiver VAoI, sampling cost, and transmission cost. Using Relative Value Iteration (RVI), we evaluate the optimal policy and establish several properties of its structure. We prove that sampling has a threshold structure in the transmitter VAoI. Among direct transmissions, it is optimal to serve the older receiver. We further characterize the transmit or idle decision through the receiver VAoI difference. Our analysis shows that link reliability and receiver VAoI imbalance have a significant effect on the optimal policy structure. Numerical results verify the structural properties and demonstrate the performance gains of the optimal policy over multiple baselines.
\end{abstract}

%%%%%%%%%%%%%%%%%%%%%%%%%%%%%%%%
% Introduction %
%%%%%%%%%%%%%%%%%%%%%%%%%%%%%%%%

\section{Introduction}
\vspace{-2mm}
Information freshness is considered of critical importance in status-update systems such as sensor networks \cite{ZhengTWC2024}, autonomous vehicles \cite{ChenOJCOMS2025}, and Internet-of-Things (IoT) networks \cite{KahramanJIOT2024} and wireless edge systems \cite{ElshazlyICC2026,AliTCCN2025}, where timely information is needed for control, estimation, and decision-making \cite{SoleymaniCUP2023}. Traditionally, information freshness is quantified using Age of Information (AoI) \cite{AoI}, which has become a central metric for tracking timeliness in recent years \cite{AoISurvey}. However, AoI measures freshness purely through elapsed time and may therefore penalize time staleness even when the underlying source content has not changed and information is fresh. This motivates freshness metrics suitable for semantic communication and semantic freshness \cite{SemCom,UysalSemantic2022}. Version Age of Information (VAoI) is one such semantic freshness metric, introduced in \cite{AoG}, that tracks the destination's lag from the source in terms of generated versions, thus quantifying semantic freshness\cite{SalimnejadTCOMM2025}.

\begin{figure}[t]
    \vspace{-2mm}
    \centering
    \includegraphics[width=0.9\linewidth]{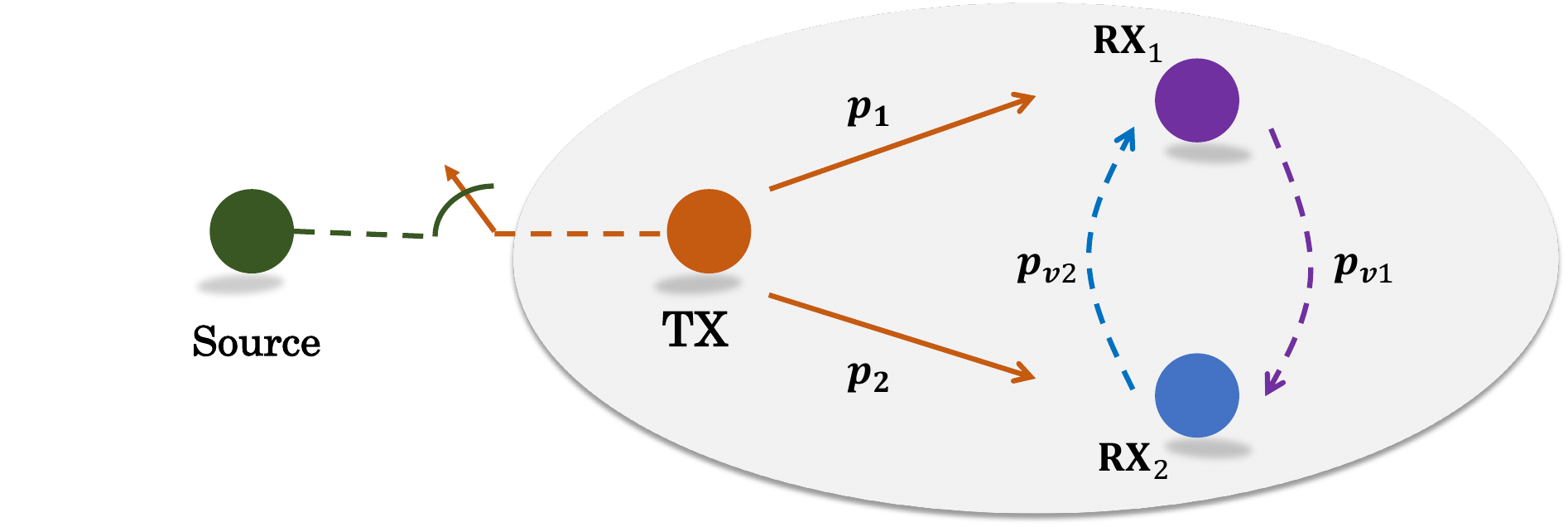} 
    \caption{Transmitter samples the information source (dashed lines and a sampler) and updates two receivers via direct links (solid lines) and receivers may gossip (dashed lines). Symbols above the lines indicate successful communication probabilities.}
    \label{fig:System Model}
    \vspace{-3mm}
\end{figure} 

 \vspace{-0.8mm}
Many networked status-update systems contain local receiver communication links. These networks do not require every update in the network to pass through the central transmitter. We refer to this decentralized information exchange as \emph{gossiping}. Existing works have studied the effects of network topology, cluster formation, mobility patterns, energy constraints, and other relevant properties in gossip networks \cite{VAoIClustered,RCGossip,VAoIContact,PushPullGossip2025,RandomBipartiteGossip2024,MultipleViewsGossip2025,CodedGossip2025}. The structure of optimal freshness policies has been studied extensively in non-gossip network settings \cite{UysalMobiHoc2025,ZakeriTWC2024,SoleymaniTIT2024,AbdElmagidJoint2020,MaatoukAoII2022,ArafaICC2018,ArafaTIT2020,ArafaTWC2019,YangICASSP2020,ArafaTGCN2022,ArafaTCOMM2021,HsuTMC2020,ZhouTCOMM2019}. The advantage of such structural characterizations is that they provide analytical insight which reduces policy-search complexity and paves the way for physics-informed reinforcement-learning methods \cite{NeurIPS2022,BuraTON2026,RoyTGCN2025,ThresholdNW1,ThresholdNW2, StrcuturedRL, JitaniTCCN2022}. Policy optimization for freshness in gossip networks has recently begun to receive attention \cite{MitraICMLCN2024,MitraINFOCOM2023,MitraAllerton2022,VAoIPolicy,TransmitIdleICC2026}.

\vspace{-0.8mm}
In this work, we study a two-receiver VAoI gossip network, for structural characterization of the joint sampling and transmission policy. The transmitter can observe source-version changes, but it has to sample the source at a sampling cost to get the information content. The transmitter can transmit to receivers by paying a transmission cost. \emph{Gossiping} can reduce dependence on costly direct transmissions as receivers spread fresh versions among themselves. The transmitter may idle and allow receiver gossip to reduce VAoI without incurring a direct transmission cost. However, gossip can only circulate information available at the receivers. Thus, the transmitter needs to make a decision whether to transmit or sample/idle to allow gossiping.
\vspace{-0.8mm}

We formulate this as an infinite-horizon average-cost MDP \cite{ZakeriTCOMM2025}. We show that the optimal sampling decision has a threshold structure in the transmitter VAoI and that, among direct transmissions, it is optimal to serve the older receiver. We further characterize the transmit-versus-idle decision through the receiver VAoI difference and show that its threshold direction depends on the relative reliabilities of gossip and direct transmission.

\section{System Model and MDP Formulation}

We consider a time-slotted system indexed by $t\in\{0,1,2,\ldots\}$, consisting of one information source, one transmitter (TX), and two receivers, denoted by RX$_1$ and RX$_2$. Let $X(t)$ denote the source information content in slot $t$. The source remains unchanged from one slot to the next with probability $1-\alpha$ and changes with probability $\alpha$, independently across slots. TX observes source-version changes but does not obtain the new source content unless it samples it. An RX may either receive communication directly from TX, or through gossiping from the other RX. We assume that RXs acknowledge reception of successful communication, regardless of its source, by sending ACK/NACK feedback to TX, thereby allowing it to track the RXs' VAoIs \cite{VAoIPolicy}.

Let $N_s(t)$ denote the source version index at the beginning of slot $t$, and let $N_x(t)$ denote the freshest version index available at node $x$ (TX, RX$_1$ or RX$_2$). The version age of information (VAoI) for node $x$ is defined as
\begin{align}
v_x(t)\triangleq N_s(t)-N_x(t).
\end{align}
Thus, $v_x(t)=0$ means that node $x$ has the source version at slot $t$, while $v_x(t)>0$ means that it lags behind the source by $v_x(t)$ versions, which is its VAoI. We define the system state as
\begin{align}
s(t)=\big(v_{tx}(t),v_1(t),v_2(t)\big),
\end{align}
where $v_{tx}(t)$ is TX's VAoI and $v_i(t)$ is that of RX$_i$, $i\in\{1,2\}$. All RX versions satisfy
\begin{align}
0\le v_{tx}(t)\le v_i(t),\qquad i\in\{1,2\}.
\end{align}
We truncate all VAoIs at $V_{\max}$. 

At each slot, TX selects an action $u$ from the set
\begin{align}
\mathcal A=\{0,1,2,3\},
\end{align}
where $u=0$ denotes idling, $u=1$ and $u=2$ denote transmitting to RX$_1$ and RX$_2$, respectively, and $u=3$ denotes sampling the source. A direct transmission to RX$_i$ succeeds with probability $p_i$. We assume that direct transmission and RX gossip share the same communication resources. Hence, gossip occurs only in slots without a direct transmission. That is, only during idling or sampling slots, RXs gossip with each other, independently. A gossip communication from RX$_1$ (resp. RX$_2$) to RX$_2$ (resp. RX$_1$) succeeds with probability $p_{v1}$ (resp. $p_{v2}$). Let $\delta\in\{0,1\}$ denote the source evolution in a slot, with
\begin{align}
\mathbb P(\delta=1)=\alpha,\qquad
\mathbb P(\delta=0)=1-\alpha.
\end{align}
We define the VAoIs after source evolution as
\begin{align}
\tilde v_{tx} &= (v_{tx}+\delta)\wedge V_{\max},\\
\tilde v_i &= (v_i+\delta)\wedge V_{\max},\qquad i=1,2,\\
\tilde v &= \tilde v_1\wedge \tilde v_2.
\end{align}

Let $\tilde s$ denote the next state. For $u=1$, a successful transmission makes RX$_1$ adopt TX's version, while a failed transmission leaves the receiver versions unchanged:
\begin{equation}
\label{eq:dyn_u1_vai}
\tilde s=
\begin{cases}
(\tilde v_{tx},\,\tilde v_{tx},\,\tilde v_2),
& \text{w.p. } p_1,\\
(\tilde v_{tx},\,\tilde v_1,\,\tilde v_2),
& \text{w.p. } 1-p_1.
\end{cases}
\end{equation}
The transition for $u=2$ follows symmetrically. Under idling, $u=0$, the receivers gossip using their currently available versions. Hence,
\begin{equation}
\label{eq:u0}
\tilde s=
\begin{cases}
(\tilde v_{tx},\,\tilde v,\,\tilde v),
& \text{w.p. } p_{v1}p_{v2},\\
(\tilde v_{tx},\,\tilde v_1,\,\tilde v),
& \text{w.p. } p_{v1}(1-p_{v2}),\\
(\tilde v_{tx},\,\tilde v,\,\tilde v_2),
& \text{w.p. } (1-p_{v1})p_{v2},\\
(\tilde v_{tx},\,\tilde v_1,\,\tilde v_2),
& \text{w.p. } (1-p_{v1})(1-p_{v2}).
\end{cases}
\end{equation}
For $u=3$, TX samples the source and obtains the current source version, resetting its VAoI to zero. In this slot as well, gossiping occurs as in \eqref{eq:u0}. The sampling transition is obtained by replacing the first coordinate $\tilde v_{tx}$ in \eqref{eq:u0} with $0$. 

The one-step cost is the sum receiver VAoI plus the transmitter action cost:
\begin{align}
c(s,u)=v_1+v_2+C(u),
\label{eq:stage_cost_vai}
\end{align}
where
\begin{align}
C(u)=
\begin{cases}
0, & u=0,\\
C_{\mathrm{tx}}, & u\in\{1,2\},\\
C_s, & u=3.
\end{cases}
\label{eq:op_cost_vai}
\end{align}
Here, $C_{\mathrm{tx}}$ and $C_s$ denote the direct transmission and sampling costs, respectively. A stationary policy $\pi:\mathcal S\to\mathcal A$ maps each state to an action. Its long-term average cost starting from $s(0)=s$ is
\begin{align}
\rho^\pi(s)
=
\limsup_{T\to\infty}
\frac{1}{T}
\mathbb E^\pi
\left[
\sum_{t=0}^{T-1}
c(s(t),\pi(s(t)))
\;\middle|\;
s(0)=s
\right].
\label{eq:avg_cost_vai}
\end{align}
The optimal average cost is $\rho^\star=\min_\pi\rho^\pi(s)$ and is independent of the initial state since the finite-state MDP is communicating\cite{Puterman}. We use Relative Value Iteration (RVI) \cite{ZakeriCommLett2025} to obtain $\rho^\star$ and the relative value function $U(s)$ satisfying the Bellman optimality equation \cite{Bertsekas}:
\begin{align}
\rho^\star+U(s)
=
\min_{u\in\mathcal A}
\left\{
c(s,u)
+
\sum_{\bar s\in\mathcal S}
P(\bar s|s,u)U(\bar s)
\right\}.
\label{eq:bellman_vai}
\end{align}
Define
\begin{align}
Q(s,u)
\triangleq
c(s,u)+
\sum_{\bar s\in\mathcal S}
P(\bar s|s,u)U(\bar s).
\end{align}
Then our goal is to characterize an optimal stationary policy
\begin{align}
\pi^\star(s)\in\arg\min_{u\in\mathcal A}Q(s,u).
\label{eq:policy_vai}
\end{align}

%%%%%%%%%%%%%%%%%%%%%%%%%%%%%%%%
% RVI and Structure %
%%%%%%%%%%%%%%%%%%%%%%%%%%%%%%%%
\section{Properties of the Relative Value Function and Structure of the Optimal Policy}

We first establish two monotonicity properties that will be used to
characterize the optimal policy. For states $s=(v_{tx},v_1,v_2)$ and
$s'=(v'_{tx},v'_1,v'_2)$, let $s\preceq s'$ denote componentwise ordering.
For action $u$, let $F_{u,\omega,\delta}(s)$ denote the next state mapping for a realization $\omega$ of the channel outcomes and source
evolution $\delta\in\{0,1\}$. Proofs are omitted due to space constraints.

\begin{lemma}[Order preservation]
\label{lem:F_order_pres}
For any $u\in\mathcal A$, $\delta\in\{0,1\}$, and realization $\omega$ of
the channel outcomes under action $u$, the next state is
order-preserving:
\begin{align}
s\preceq s'
\quad\Longrightarrow\quad
F_{u,\omega,\delta}(s)
\preceq
F_{u,\omega,\delta}(s').
\end{align}
\end{lemma}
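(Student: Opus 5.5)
The plan is to couple the two trajectories by feeding the same channel realization $\omega$ and the same source evolution $\delta$ into both $s$ and $s'$. Each coordinate of $F_{u,\omega,\delta}(s)$ is then a composition of a few elementary maps, and each of these is nondecreasing in its arguments. We check this building-block by building-block and then treat each action separately.

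\textbf{Building blocks.} The map $x\mapsto (x+\delta)\wedge V_{\max}$ is nondecreasing for fixed $\delta$. Hence $v_{tx}\le v'_{tx}$ implies $\tilde v_{tx}\le \tilde v'_{tx}$, and likewise $\tilde v_i\le\tilde v'_i$ for $i=1,2$. The minimum $(a,b)\mapsto a\wedge b$ is nondecreasing in each argument, so $\tilde v=\tilde v_1\wedge\tilde v_2\le \tilde v'_1\wedge\tilde v'_2=\tilde v'$. Constants such as the reset value $0$ under sampling are trivially order-preserving.

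\textbf{Case analysis over actions.} With $\omega$ fixed, the branch in \eqref{eq:dyn_u1_vai} or \eqref{eq:u0} is the same for $s$ and $s'$, because the branch is selected by $\omega$ and not by the state.
\begin{itemize}
\item For $u=1$, the success branch gives $(\tilde v_{tx},\tilde v_{tx},\tilde v_2)$ and the failure branch gives $(\tilde v_{tx},\tilde v_1,\tilde v_2)$. Each coordinate is one of the monotone building blocks, so the ordering holds componentwise.
\item $u=2$ is symmetric to $u=1$.
\item For $u=0$, the four gossip outcomes produce coordinates drawn from $\{\tilde v_{tx},\tilde v_1,\tilde v_2,\tilde v\}$, and all of these are monotone.
\item For $u=3$, the first coordinate is $0$ in both trajectories, and the receiver coordinates are as under $u=0$.
\end{itemize}

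\textbf{Main subtlety.} The potential obstacle is that the coupling must be well defined. The realization $\omega$ has to be a state-independent random element, namely the success indicators of the direct link and of the two gossip links, whose law depends only on $u$. Once $\omega$ is modeled this way, the proof reduces to the monotonicity checks above.

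A second point is that the ordering $\preceq$ is taken on the full triple. Feasibility of the next state, $\tilde v_{tx}\le$ the receiver coordinates, is not part of the claim, though it holds anyway. We never need to compare receiver coordinates across different indices; we only need, for example, that $\tilde v_{tx}\le\tilde v'_{tx}$ when $\tilde v_{tx}$ is copied into RX$_1$.
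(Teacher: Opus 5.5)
Your proof is correct and complete. The paper omits this proof for space, but your pathwise coupling (same $\omega$ and $\delta$ for both states, state-independent branch selection, and nondecreasing building blocks $x\mapsto(x+\delta)\wedge V_{\max}$, $\wedge$, and the constant $0$) is the natural argument, and it matches how the paper uses the lemma realization-by-realization in the proof of Theorem~\ref{thm:sampling_threshold}.
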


\begin{lemma}[Monotonicity]
\label{lem:coord_monotone}
The relative value function $U(s)$ is coordinate-wise nondecreasing:
\begin{align}
s\preceq s'
\quad\Longrightarrow\quad
U(s)\le U(s').
\label{eq:U_mono}
\end{align}
\end{lemma}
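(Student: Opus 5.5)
We argue by induction on the iterates of relative value iteration and then pass to the limit. Let $U_0\equiv 0$ and define
\begin{align}
U_{k+1}(s)=\min_{u\in\mathcal A}\Big\{c(s,u)+\sum_{\bar s}P(\bar s|s,u)U_k(\bar s)\Big\}-h_k,
\end{align}
where $h_k$ is the value at a fixed reference state. The constant $h_k$ does not depend on $s$, so it does not affect comparisons between states. Because the MDP is finite and communicating, and assuming aperiodicity (or using the standard aperiodicity transformation, which preserves the argument), $U_k\to U$ pointwise. Nondecreasing functions are preserved under pointwise limits. It therefore suffices to show that each $U_k$ is coordinate-wise nondecreasing.

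For the inductive step, assume $U_k$ is nondecreasing and take $s\preceq s'$. Fix an action $u$.

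First, the stage cost satisfies $c(s,u)=v_1+v_2+C(u)\le v'_1+v'_2+C(u)=c(s',u)$.

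Second, the transition can be written as a common random map: $\sum_{\bar s}P(\bar s|s,u)U_k(\bar s)=\mathbb E_{\omega,\delta}[U_k(F_{u,\omega,\delta}(s))]$, where the distribution of $(\omega,\delta)$ depends only on $u$ and not on the state. This holds because the channel success probabilities $p_i,p_{v1},p_{v2}$ and $\alpha$ are state-independent. We couple the two starting states on the same $(\omega,\delta)$. By Lemma~\ref{lem:F_order_pres}, $F_{u,\omega,\delta}(s)\preceq F_{u,\omega,\delta}(s')$ realization by realization. The induction hypothesis then gives $U_k(F_{u,\omega,\delta}(s))\le U_k(F_{u,\omega,\delta}(s'))$, and taking expectations preserves the inequality.

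Hence $Q_k(s,u)\le Q_k(s',u)$ for every $u$. Taking the minimum over $u$ on both sides gives $U_{k+1}(s)\le U_{k+1}(s')$.

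The main point needing care is justifying the coupling representation. The outcome labels $\omega$ must be defined so that the same event, such as ``gossip from RX$_1$ succeeds,'' has the same probability from both states. When $v_1=v_2$ the idle-transition cases in \eqref{eq:u0} merge, so we parametrize by the independent success indicators rather than by the distinct next states. Truncation at $V_{\max}$ also has to be compatible with the ordering; this is already covered by Lemma~\ref{lem:F_order_pres}, since $\wedge V_{\max}$ and $\wedge$ between coordinates are monotone.

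A further technical point is the validity constraint $v_{tx}\le v_i$. The states $s$ and $s'$ both lie in $\mathcal S$, and the maps $F$ keep states in $\mathcal S$, so the comparison stays within the state space.

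Finally, convergence of RVI needs the unichain/aperiodicity condition. If this is in doubt, we instead run the argument on the discounted value functions $V_\beta$, which are monotone by the same induction, and take $U(s)=\lim_{\beta\to1}(V_\beta(s)-V_\beta(s_0))$ along a subsequence. Monotonicity is inherited in that limit as well.
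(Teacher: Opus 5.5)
The paper gives no proof of this lemma (``Proofs are omitted due to space constraints''), but your argument is the one its structure sets up, and it is correct. You run an induction over relative value iteration, combining the nondecreasing stage cost with the state-independent coupling on $(\omega,\delta)$ and Lemma~\ref{lem:F_order_pres} (the same realization-wise step the paper uses inside the proof of Theorem~\ref{thm:sampling_threshold}), and then pass to the limit; your treatment of the merged gossip outcomes when $v_1=v_2$, the aperiodicity transformation (identity map and scaled cost stay monotone), and the vanishing-discount fallback covers the technical points the paper leaves implicit.
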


We now characterize the optimal sampling decision as a function of the
transmitter VAoI $v_{tx}$.

\begin{theorem}[Sampling threshold]
\label{thm:sampling_threshold}
Fix the receiver VAoIs $(v_1,v_2)$. If sampling is optimal at some feasible
$v_{tx}=k$, then it remains optimal for every feasible $v_{tx}\ge k$.
Therefore, the optimal sampling decision has a threshold structure in
$v_{tx}$.
\end{theorem}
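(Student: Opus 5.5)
The plan is to compare the $Q$-values of sampling and of every other action as $v_{tx}$ varies, with $(v_1,v_2)$ held fixed. Write $s_k=(k,v_1,v_2)$ for a feasible state, meaning $k\le v_1\wedge v_2$. The argument rests on two facts:
\begin{itemize}
\item[(i)] $Q(s_k,3)$ does not depend on $k$.
\item[(ii)] For each $u\in\{0,1,2\}$, $Q(s_k,u)$ is nondecreasing in $k$.
\end{itemize}
Given these, suppose sampling is optimal at $v_{tx}=k$, so $Q(s_k,3)\le Q(s_k,u)$ for all $u$. Then for any feasible $k'\ge k$ and any $u\neq 3$,
\begin{align}
Q(s_{k'},3)=Q(s_k,3)\le Q(s_k,u)\le Q(s_{k'},u).
\end{align}
Hence sampling attains the minimum at $s_{k'}$. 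Adopting the tie-breaking convention that sampling is selected whenever it lies in the $\arg\min$ in \eqref{eq:policy_vai}, this gives the threshold.

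For (i), I will look at the sampling transition directly. Under $u=3$, the first coordinate of the next state is reset to $0$. The receiver coordinates evolve by the gossip rule \eqref{eq:u0}, which involves only $\tilde v_1,\tilde v_2$ and not $\tilde v_{tx}$. Consequently $P(\cdot\mid s_k,3)$ is the same distribution for every $k$. The stage cost $c(s_k,3)=v_1+v_2+C_s$ is also free of $k$, so $Q(s_k,3)$ is constant in $k$.

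For (ii), I will use a coupling argument. Fix $u\in\{0,1,2\}$ and take $k\le k'$, so that $s_k\preceq s_{k'}$. Couple the two next states through the same channel realization $\omega$ and the same source evolution $\delta$. Lemma~\ref{lem:F_order_pres} then gives
\begin{align}
F_{u,\omega,\delta}(s_k)\preceq F_{u,\omega,\delta}(s_{k'}).
\end{align}
Lemma~\ref{lem:coord_monotone} turns this into $U(F_{u,\omega,\delta}(s_k))\le U(F_{u,\omega,\delta}(s_{k'}))$. The probabilities of $(\omega,\delta)$ do not depend on the state, so averaging over them preserves the inequality for the expected continuation value. The stage cost $v_1+v_2+C(u)$ does not involve $v_{tx}$, and therefore $Q(s_k,u)\le Q(s_{k'},u)$.

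I expect the main subtlety to be in (ii), in checking that the coupled next-state maps really are functions of a common $(\omega,\delta)$ whose law is independent of $v_{tx}$. This holds because $p_i$, $p_{v1}$, $p_{v2}$ and $\alpha$ are state-independent. Truncation at $V_{\max}$ also needs a check, and it is harmless because $x\mapsto(x+\delta)\wedge V_{\max}$ is monotone. Feasibility is preserved because $k'$ is taken feasible by assumption.
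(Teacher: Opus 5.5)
Your proposal is correct and follows essentially the same route as the paper. Both show that $Q(s_k,3)$ is constant in $k$ because sampling resets $v_{tx}$ to $0$, and that each non-sampling $Q(s_k,u)$ is nondecreasing in $k$ via Lemma~\ref{lem:F_order_pres} and Lemma~\ref{lem:coord_monotone}; your direct chain $Q(s_{k'},3)=Q(s_k,3)\le Q(s_k,u)\le Q(s_{k'},u)$ is simply an unpacked form of the paper's argument that the sampling gap $\Delta(k)$ is nonincreasing.
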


\begin{proof}
Fix $(v_1,v_2)$ and consider the feasible states
\begin{align}
s_k=(k,v_1,v_2),\qquad
k=0,1,\ldots,\min\{v_1,v_2\}.
\end{align}
Thus, $s_i$ and $s_j$ differ only in their transmitter VAoIs $i$ and $j$,
respectively. For the sampling action $u=3$,
\begin{align}
c(s_k,3)=v_1+v_2+C_s,
\end{align}
which is independent of $k$. Moreover, sampling resets the transmitter VAoI
to zero. Hence, for every source realization $\delta\in\{0,1\}$ and every
realization $\omega$ of the gossip outcomes,
\begin{align}
F_{3,\omega,\delta}(s_i)
=
F_{3,\omega,\delta}(s_j),
\qquad \forall i,j.
\end{align}
Moreover, the probability of each realization $(\delta,\omega)$ is the same
from $s_i$ and $s_j$. Thus, every possible next state occurs with the same
probability under the two initial states, and hence
\begin{align}
\mathbb E[U(\bar S)\mid s_i,3]
=
\mathbb E[U(\bar S)\mid s_j,3],
\end{align}
and therefore $Q(s_k,3)$ is independent of $k$. Now consider any non-sampling action $u\in\{0,1,2\}$. For $i\le j$,
\begin{align}
s_i\preceq s_j,
\qquad
c(s_i,u)=c(s_j,u).
\end{align}
By Lemma~\ref{lem:F_order_pres}, for every source and channel realization,
\begin{align}
F_{u,\omega,\delta}(s_i)
\preceq
F_{u,\omega,\delta}(s_j).
\end{align}
Lemma~\ref{lem:coord_monotone} then implies
\begin{align}
Q(s_i,u)\le Q(s_j,u),
\qquad u\in\{0,1,2\}.
\end{align}
Consequently,
\begin{align}
\min_{u\in\{0,1,2\}}Q(s_i,u)
\le
\min_{u\in\{0,1,2\}}Q(s_j,u),
\qquad i\le j.
\end{align}

Define the sampling gap as
\begin{align}
\Delta(k)
\triangleq
Q(s_k,3)
-
\min_{u\in\{0,1,2\}}Q(s_k,u).
\end{align}
Since $Q(s_k,3)$ is constant in $k$, whereas the minimum non-sampling
$Q$-value is nondecreasing in $k$, $\Delta(k)$ is nonincreasing. Hence, if
$\Delta(k)\le0$ at some feasible $k$, then $\Delta(j)\le0$ for every
feasible $j\ge k$, and sampling remains optimal. If sampling is optimal for
at least one feasible value of $k$, the corresponding threshold is
\begin{align}
\tau(v_1,v_2)
\triangleq
\min\{k:\Delta(k)\le0\},
\end{align}
which completes the proof.
\end{proof}

Theorem~\ref{thm:sampling_threshold} separates the sampling decision from the
remaining control problem. The remaining decision is whether to transmit the transmitter's version
to one of the receivers or idle and rely on gossip. To characterize this
decision, we assume symmetric links,
$ 
p_1=p_2=p
$, and $
p_{v1}=p_{v2}=p_v.
$
Under this symmetry, the relative value function is invariant to exchanging
the receiver coordinates.

\begin{lemma}[Symmetry]
\label{lem:symmetry}
The relative value function satisfies
\begin{align}
U(v_{tx},v_1,v_2)
=
U(v_{tx},v_2,v_1),
\quad
\forall (v_{tx},v_1,v_2)\in\mathcal S.
\end{align}
\end{lemma}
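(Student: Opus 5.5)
The plan is to exhibit a permutation symmetry of the MDP and push it through the Bellman equation \eqref{eq:bellman_vai}. Define the swap map $\sigma(v_{tx},v_1,v_2)=(v_{tx},v_2,v_1)$ on $\mathcal S$ and the action relabeling $\hat\sigma$ with $\hat\sigma(0)=0$, $\hat\sigma(3)=3$, $\hat\sigma(1)=2$, $\hat\sigma(2)=1$. Both are involutions, and $\sigma$ preserves feasibility since the constraint $0\le v_{tx}\le v_i$ is symmetric in $i$.

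The first step is to check cost invariance and transition equivariance under $p_1=p_2=p$ and $p_{v1}=p_{v2}=p_v$. The cost satisfies $c(\sigma s,\hat\sigma u)=c(s,u)$, because $v_1+v_2$ is symmetric and $C(1)=C(2)=C_{\mathrm{tx}}$. For the kernel, I would show that
\begin{align}
P(\sigma\bar s\mid\sigma s,\hat\sigma u)=P(\bar s\mid s,u)
\end{align}
for every $s,\bar s,u$. For $u=1$ versus $u=2$, this follows directly from \eqref{eq:dyn_u1_vai} and its mirror, since both success probabilities equal $p$. For $u\in\{0,3\}$, swapping the coordinates in \eqref{eq:u0} exchanges the second and third cases. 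These cases carry the probabilities $p_{v1}(1-p_{v2})$ and $(1-p_{v1})p_{v2}$, which coincide under symmetry. The first and last cases are mapped to themselves, using $\tilde v=\tilde v_1\wedge\tilde v_2$ and the fact that truncation at $V_{\max}$ commutes with the swap.

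The second step transfers the symmetry to $U$. Run RVI from $U_0\equiv0$ with a symmetric reference state, or note that the normalization term is a constant subtracted uniformly, so the reference choice does not affect symmetry. I would then prove by induction that $U_n(\sigma s)=U_n(s)$. If $U_n$ is symmetric, then by the equivariance above,
\begin{align}
Q_n(\sigma s,\hat\sigma u)=Q_n(s,u).
\end{align}
Since $\hat\sigma$ is a bijection of $\mathcal A$, the minima over $u$ coincide, and hence $U_{n+1}(\sigma s)=U_{n+1}(s)$. Passing to the limit $U_n\to U$, which RVI guarantees in this finite communicating setting, yields the lemma.

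The main obstacle is that relative value functions are not unique, so the limit argument must be anchored properly. If convergence of RVI is only assumed, the induction argument suffices. Alternatively, one can argue that $\tilde U(s)\triangleq U(\sigma s)$ also solves \eqref{eq:bellman_vai} with the same $\rho^\star$. Then, under a unichain-type uniqueness up to an additive constant, $\tilde U=U+c$, and evaluating at a fixed point of $\sigma$ (any state with $v_1=v_2$) forces $c=0$.
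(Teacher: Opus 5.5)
Your proposal is correct and follows the paper's reasoning. The paper justifies the lemma in one sentence: swapping the receivers together with the actions $1\leftrightarrow2$ leaves the stage cost and every success probability unchanged, which is your equivariance step, and your RVI induction makes explicit how that invariance passes to $U$. Keep the RVI-anchored argument as the main route, because the unichain premise of your fallback fails here: the policy that samples whenever $v_{tx}<V_{\max}$ and idles at $v_{tx}=V_{\max}$ has two closed classes, one of them the absorbing state $(V_{\max},V_{\max},V_{\max})$.
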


The result follows from the fact that swapping the two symmetric receivers with their corresponding transmission actions does not change the stage cost or any success probability.

\begin{lemma}[Older-receiver transmission]
\label{lem:serve_older}
For any feasible state $s=(v_{tx},v_1,v_2)\in\mathcal S$,
\begin{align}
v_1\ge v_2
\quad\Longrightarrow\quad
Q(s,1)\le Q(s,2).
\end{align}
Thus, among the two direct transmissions, it is optimal to serve the
receiver with the larger VAoI.
\end{lemma}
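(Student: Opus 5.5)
Since both transmissions have stage cost $v_1+v_2+C_{\mathrm{tx}}$, the comparison reduces to the expected continuation values. Writing $\tilde v_{tx},\tilde v_1,\tilde v_2$ for the post-evolution VAoIs,
\begin{align}
Q(s,1)-Q(s,2)
=
p\,\mathbb E_\delta\!\left[U(\tilde v_{tx},\tilde v_{tx},\tilde v_2)-U(\tilde v_{tx},\tilde v_1,\tilde v_{tx})\right],
\end{align}
because the failure branches, each with probability $1-p$, both lead to $(\tilde v_{tx},\tilde v_1,\tilde v_2)$ and cancel. It therefore suffices to show, for each fixed $\delta\in\{0,1\}$, that
\begin{align}
U(\tilde v_{tx},\tilde v_{tx},\tilde v_2)\le U(\tilde v_{tx},\tilde v_1,\tilde v_{tx}).
\end{align}

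By Lemma~\ref{lem:symmetry}, the right-hand side equals $U(\tilde v_{tx},\tilde v_{tx},\tilde v_1)$. The two states $(\tilde v_{tx},\tilde v_{tx},\tilde v_2)$ and $(\tilde v_{tx},\tilde v_{tx},\tilde v_1)$ then differ only in the last coordinate. Since $v_1\ge v_2$ and the map $x\mapsto (x+\delta)\wedge V_{\max}$ is nondecreasing, we have $\tilde v_1\ge\tilde v_2$, so the first state is componentwise dominated by the second. Both states are feasible, because $\tilde v_{tx}\le\tilde v_i$ is preserved under the same map. Lemma~\ref{lem:coord_monotone} then gives the required inequality.

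Summing over $\delta$ with weights $\alpha$ and $1-\alpha$, and multiplying by $p\ge 0$, yields $Q(s,1)\le Q(s,2)$.

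The main point needing care is the bookkeeping under truncation at $V_{\max}$: the ordering $\tilde v_1\ge\tilde v_2$ and the feasibility of the compared states must survive the $\wedge V_{\max}$ operation. Both follow from the monotonicity of truncation. The rest of the argument is a direct combination of Lemma~\ref{lem:symmetry} and Lemma~\ref{lem:coord_monotone}, and it relies on the symmetric success probability $p_1=p_2=p$ so that the success and failure branches carry equal weights.
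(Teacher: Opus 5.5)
Your proposal is correct and follows the paper's proof essentially step for step. You cancel the identical stage costs and failure branches, map the $u=2$ success state $(\tilde v_{tx},\tilde v_1,\tilde v_{tx})$ to $(\tilde v_{tx},\tilde v_{tx},\tilde v_1)$ via Lemma~\ref{lem:symmetry}, and conclude with Lemma~\ref{lem:coord_monotone} using $\tilde v_1\ge\tilde v_2$. Your remarks on the explicit factor $p$, the role of $p_1=p_2$, and feasibility under truncation at $V_{\max}$ make explicit what the paper leaves implicit.
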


\begin{proof}
The two transmission actions have the same immediate cost and the same
transition upon transmission failure. Upon a successful transmission,
action $u=1$ yields
\begin{align}
(\tilde v_{tx},\tilde v_{tx},\tilde v_2),
\end{align}
whereas action $u=2$ yields
\begin{align}
(\tilde v_{tx},\tilde v_1,\tilde v_{tx}).
\end{align}
By Lemma~\ref{lem:symmetry}, the latter state has the same relative value as
\begin{align}
(\tilde v_{tx},\tilde v_{tx},\tilde v_1).
\end{align}
Since $v_1\ge v_2$ implies $\tilde v_1\ge\tilde v_2$,
Lemma~\ref{lem:coord_monotone} gives
\begin{align}
U(\tilde v_{tx},\tilde v_{tx},\tilde v_2)
\le
U(\tilde v_{tx},\tilde v_{tx},\tilde v_1),
\end{align}
and therefore $Q(s,1)\le Q(s,2)$.
\end{proof}

Hence, in the half-space $v_1\ge v_2$, the only non-sampling comparison
needed is between transmitting to RX$_1$ and idling. Define
\begin{align}
\nabla(k,v_1,v_2)
\triangleq
Q((k,v_1,v_2),1)-Q((k,v_1,v_2),0).
\end{align}
Transmission to RX$_1$ is preferred when
$\nabla(k,v_1,v_2)\le0$, whereas idling is preferred when
$\nabla(k,v_1,v_2)>0$. Let
\begin{align}
\tilde k=(k+\delta)\wedge V_{\max}.
\end{align}
Expanding the two $Q$-values gives
\begin{align}
\nabla(k,v_1,&v_2)
=
C_{\mathrm{tx}}
+
\sum_{\delta\in\{0,1\}}\mathbb P(\delta)
\Big[
p\,U(\tilde k,\tilde k,\tilde v_2)
\notag\\
&
-p_v\,U(\tilde k,\tilde v_2,\tilde v_2)
+(p_v-p)\,U(\tilde k,\tilde v_1,\tilde v_2)
\Big].
\label{eq:tx_idle_gap_vai}
\end{align}

To characterize the effect of receiver imbalance, fix the transmitter VAoI
$k$ and the smaller receiver VAoI $m$, and parameterize the state as
\begin{align}
s_d=(k,m+d,m),\qquad d=v_1-v_2\ge0.
\end{align}
Defining
\begin{align}
g(d)\triangleq\nabla(k,m+d,m),
\end{align}
the expression in \eqref{eq:tx_idle_gap_vai} becomes
\begin{align}
g(d)
&=
C_{\mathrm{tx}}
+
\sum_{\delta\in\{0,1\}}\mathbb P(\delta)
\Big[
p\,U(\tilde k,\tilde k,\tilde m)
-p_v\,U(\tilde k,\tilde m,\tilde m)
\notag\\
&\hspace{2.4cm}
+(p_v-p)\,
U(\tilde k,\widetilde{m+d},\tilde m)
\Big],
\label{eq:g_d_vai}
\end{align}
where $\tilde m=(m+\delta)\wedge V_{\max}$ and $\widetilde{m+d}=(m+d+\delta)\wedge V_{\max}$.

\begin{theorem}[VAoI-difference threshold]
\label{thm:age_difference_threshold}
For fixed transmitter VAoI $k$ and smaller receiver VAoI $m$, $g(d)$ is monotone in the receiver VAoI difference $d$. Consequently, the transmit-versus-idle
decision has a threshold structure in $d$.
\end{theorem}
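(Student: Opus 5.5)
Our plan is to read off the monotonicity of $g$ directly from the expansion \eqref{eq:g_d_vai}, and then to turn monotonicity into a threshold by the same sign argument used in the proof of Theorem~\ref{thm:sampling_threshold}.

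\textbf{Step 1: isolate the $d$-dependence.} In \eqref{eq:g_d_vai}, with $k$ and $m$ fixed, the constant $C_{\mathrm{tx}}$ and the terms $p\,U(\tilde k,\tilde k,\tilde m)$ and $-p_v\,U(\tilde k,\tilde m,\tilde m)$ do not depend on $d$. The probabilities $\mathbb P(\delta)$ do not depend on $d$ either. Hence
\begin{align*}
g(d)=A+(p_v-p)\sum_{\delta\in\{0,1\}}\mathbb P(\delta)\,U\big(\tilde k,\widetilde{m+d},\tilde m\big),
\end{align*}
where $A$ is a constant independent of $d$.

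\textbf{Step 2: monotonicity of the summand.} For each $\delta$, the map $d\mapsto \widetilde{m+d}=(m+d+\delta)\wedge V_{\max}$ is nondecreasing. Therefore $(\tilde k,\widetilde{m+d},\tilde m)\preceq(\tilde k,\widetilde{m+d'},\tilde m)$ whenever $d\le d'$. All these states are feasible, because $\tilde k\le\tilde m\le\widetilde{m+d}$. Lemma~\ref{lem:coord_monotone} then shows that each summand $U(\tilde k,\widetilde{m+d},\tilde m)$ is nondecreasing in $d$, and so is the $\mathbb P(\delta)$-weighted sum $h(d)\triangleq\sum_\delta\mathbb P(\delta)U(\tilde k,\widetilde{m+d},\tilde m)$.

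\textbf{Step 3: sign of the coefficient.} We have $g(d)=A+(p_v-p)h(d)$, so three cases arise:
\begin{itemize}
\item if $p>p_v$ (the direct link is more reliable than gossip), $g$ is nonincreasing in $d$;
\item if $p<p_v$, $g$ is nondecreasing in $d$;
\item if $p=p_v$, $g$ is constant in $d$.
\end{itemize}
In all cases $g$ is monotone, which proves the first claim of Theorem~\ref{thm:age_difference_threshold}.

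\textbf{Step 4: threshold.} Transmission is preferred exactly when $g(d)\le0$.
\begin{itemize}
\item For $p>p_v$, if $g(d_0)\le 0$, then $g(d)\le0$ for all feasible $d\ge d_0$. Transmission is optimal (among non-sampling actions) above the threshold $\min\{d: g(d)\le 0\}$.
\item For $p<p_v$, the set $\{d:g(d)\le0\}$ is a down-set. Transmission is preferred for small imbalance and idling for large $d$, with threshold $\max\{d:g(d)\le0\}$.
\end{itemize}
This matches the remark in the introduction that the direction of the threshold depends on the relative reliabilities of gossip and direct transmission.

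\textbf{Main obstacle.} The step needing the most care is justifying the decomposition \eqref{eq:g_d_vai} itself, i.e., that under $u=0$ with $v_1\ge v_2$ every gossip outcome collapses into either $(\tilde k,\tilde m,\tilde m)$ or $(\tilde k,\widetilde{m+d},\tilde m)$. Gossip from RX$_1$ to RX$_2$ gives RX$_2$ nothing new, because $\tilde v=\tilde m$. So only the RX$_2\to$RX$_1$ link matters, and it succeeds with probability $p_v$. Since the paper already states this expansion, we take it as given. We also need to confirm that truncation at $V_{\max}$ does not break feasibility or monotonicity; the $\wedge V_{\max}$ operation is monotone, so it does not.
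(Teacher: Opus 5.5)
Your proposal is correct and follows essentially the same route as the paper, which also notes that only the $(p_v-p)\,U(\tilde k,\widetilde{m+d},\tilde m)$ term depends on $d$. It then applies Lemma~\ref{lem:coord_monotone}, lets the sign of $p_v-p$ set the direction (at most one switch from transmission to idling when $p_v>p$, from idling to transmission when $p_v<p$, and no $d$-dependence when $p_v=p$), and your added checks of feasibility ($\tilde k\le\tilde m\le\widetilde{m+d}$) and of the derivation of \eqref{eq:g_d_vai} are sound details the paper leaves implicit.
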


\begin{proof}
For $d_2>d_1$, subtracting \eqref{eq:g_d_vai} evaluated at the two values
gives
\begin{align}
g(d_2)-g(d_1)
&=
(p_v-p)
\sum_{\delta\in\{0,1\}}\mathbb P(\delta)
\Big[
U(\tilde k,\widetilde{m+d_2},\tilde m)
\notag\\
&\hspace{2.5cm}
-
U(\tilde k,\widetilde{m+d_1},\tilde m)
\Big].
\end{align}
Since $d_2>d_1$ implies
\begin{align}
\widetilde{m+d_2}\ge\widetilde{m+d_1},
\end{align}
Lemma~\ref{lem:coord_monotone} shows that the quantity in brackets is
nonnegative. If $p_v>p$, then $g(d)$ is nondecreasing in $d$. Since transmission is
preferred when $g(d)\le0$, the optimal policy can switch at most once from
transmission to idling as $d$ increases. If $p_v<p$, then $g(d)$ is
nonincreasing, and the policy can switch at most once from idling to
transmission. Finally, if $p_v=p$, the $d$-dependent term in
\eqref{eq:g_d_vai} vanishes, so the transmit-versus-idle decision is
independent of the receiver VAoI difference.
\end{proof}

A useful case occurs when $k=m$, i.e., when the transmitter and the
fresher receiver have the same version. In this case,
$\tilde k=\tilde m$, and \eqref{eq:g_d_vai} reduces to
\begin{align}
g(d)
=
C_{\mathrm{tx}}
+
(p_v-p)
\sum_{\delta\in\{0,1\}}&\mathbb P(\delta)
\Big[
U(\tilde m,\widetilde{m+d},\tilde m)
\notag\\
&
-
U(\tilde m,\tilde m,\tilde m)
\Big].
\label{eq:equal_tx_fresh_rx}
\end{align}
By Lemma~\ref{lem:coord_monotone}, the quantity in brackets is nonnegative.
Therefore, if $p_v\ge p$ and $C_{\mathrm{tx}}>0$, then $g(d)>0$ for every
$d$. Hence, idling is strictly preferred to direct transmission. Thus, when
the transmitter and the fresher receiver hold the same version and gossip is
at least as reliable as the direct link, direct transmission is strictly
suboptimal.

\begin{figure}[t]
    \centering
    \includegraphics[width=0.7\linewidth]{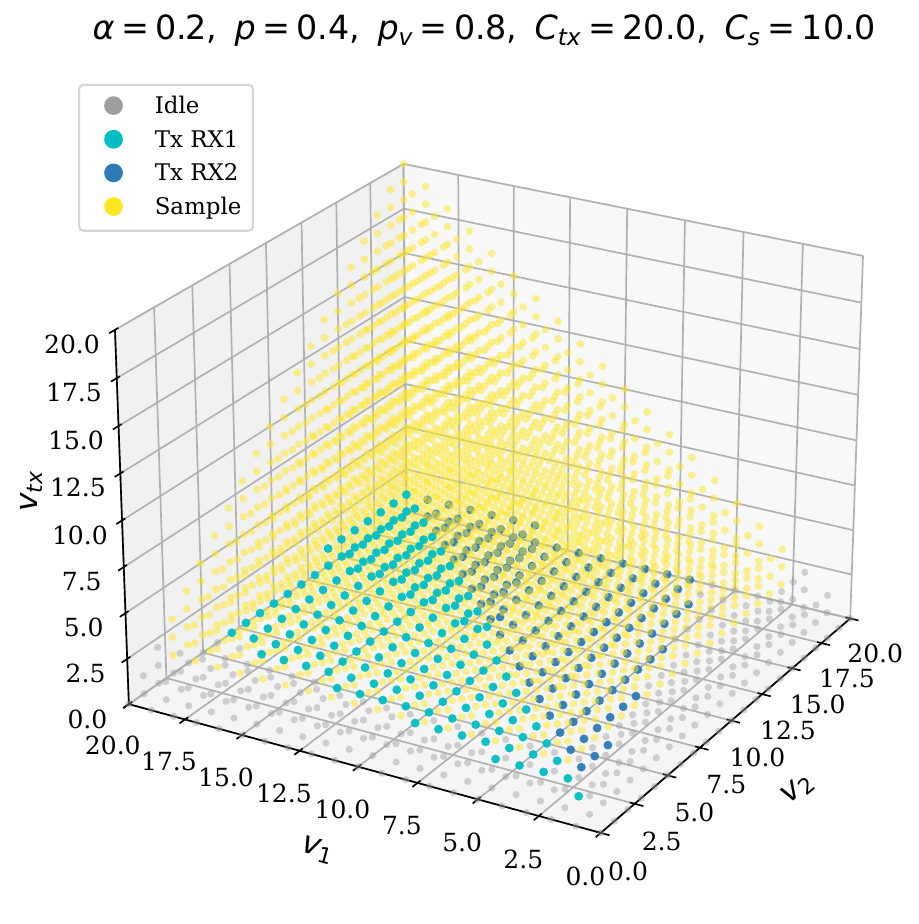}
    \caption{Optimal policy obtained using RVI.}
    \label{fig:3D Policy}
\end{figure}

\begin{figure*}[!t]
    \centering
    \includegraphics[width=0.7\linewidth]{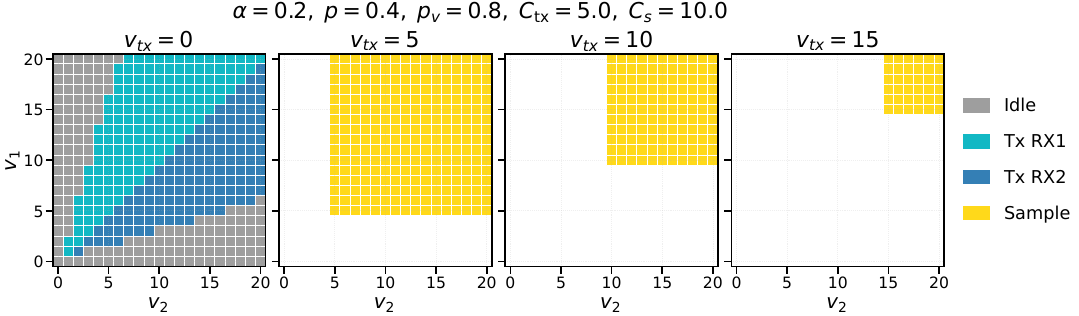}
    \caption{Fixed-$v_{tx}$ two-dimensional slices of the optimal policy in
    Fig.~\ref{fig:3D Policy} over $(v_1,v_2)$ for different values of
    $v_{tx}$.}
    \label{fig:2D Policy}
\end{figure*}

\begin{figure}[t]
    \centering
    \includegraphics[width=0.8\linewidth]{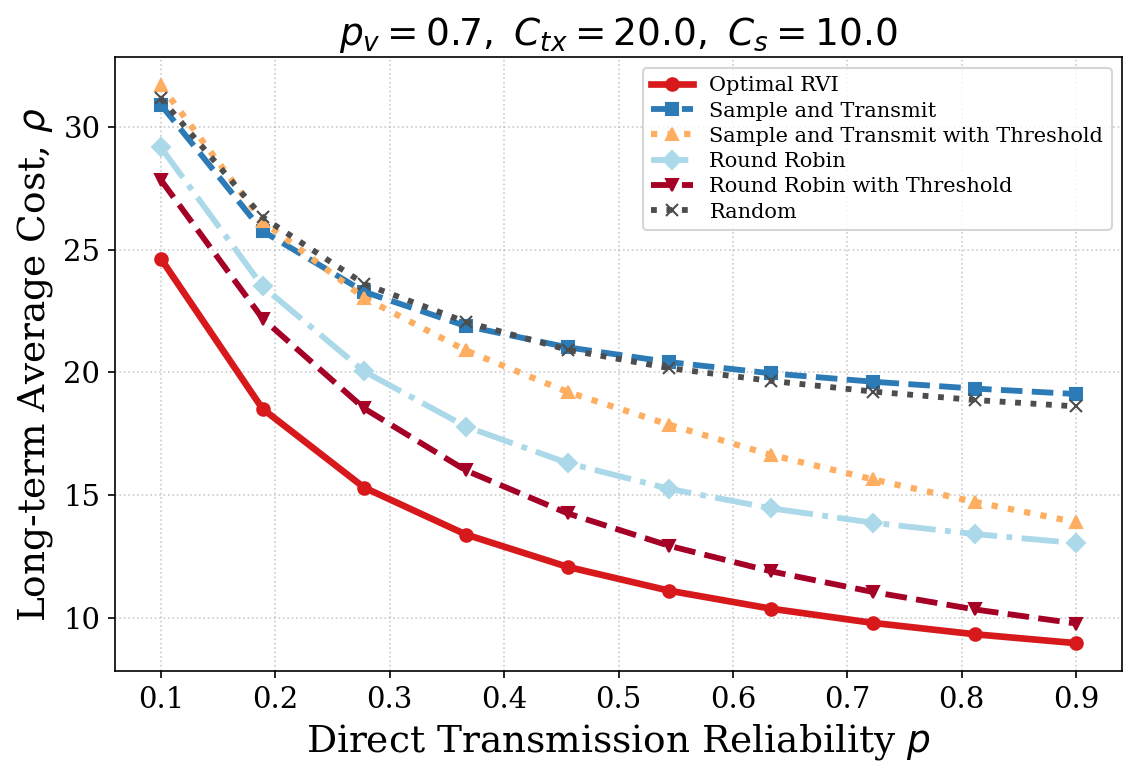}
    \caption{Performance versus direct-transmission reliability.}
    \label{fig:p_sweep}
\end{figure}

\begin{figure}[t]
    \centering
    \includegraphics[width=0.8\linewidth]{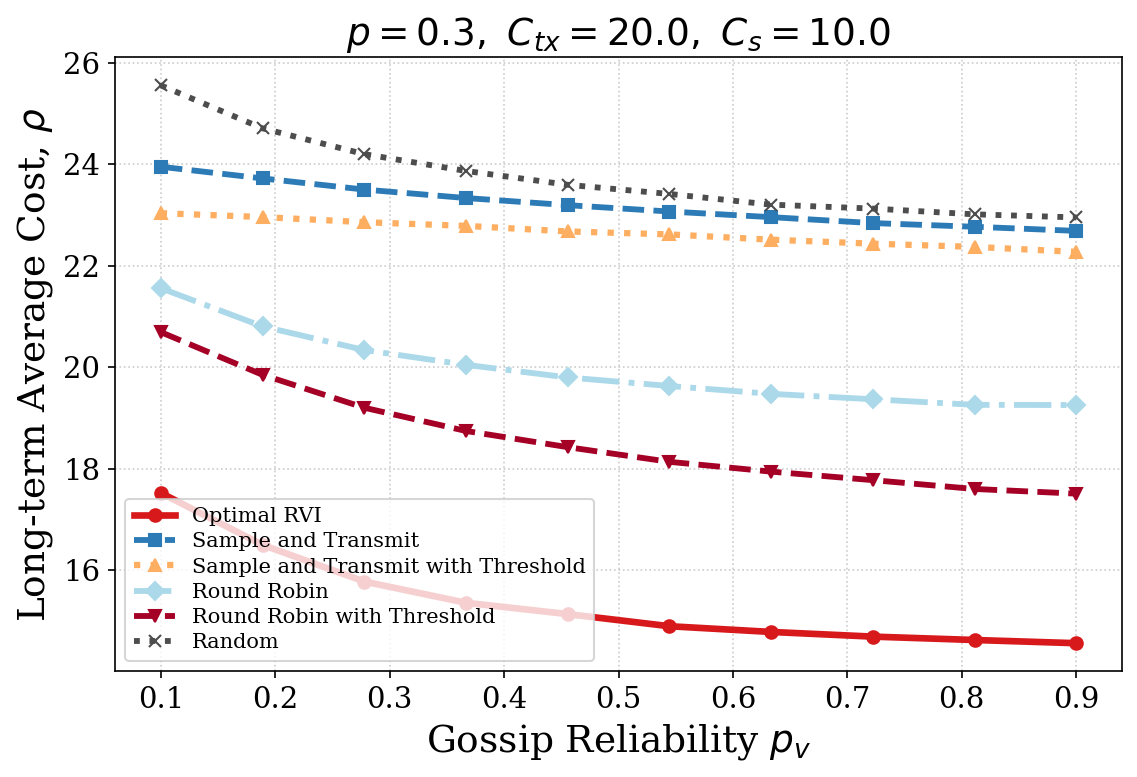}
    \caption{Performance versus gossip reliability.}
    \label{fig:pv_sweep}
\end{figure}

\begin{figure}[t]
    \centering
    \includegraphics[width=0.8\linewidth]{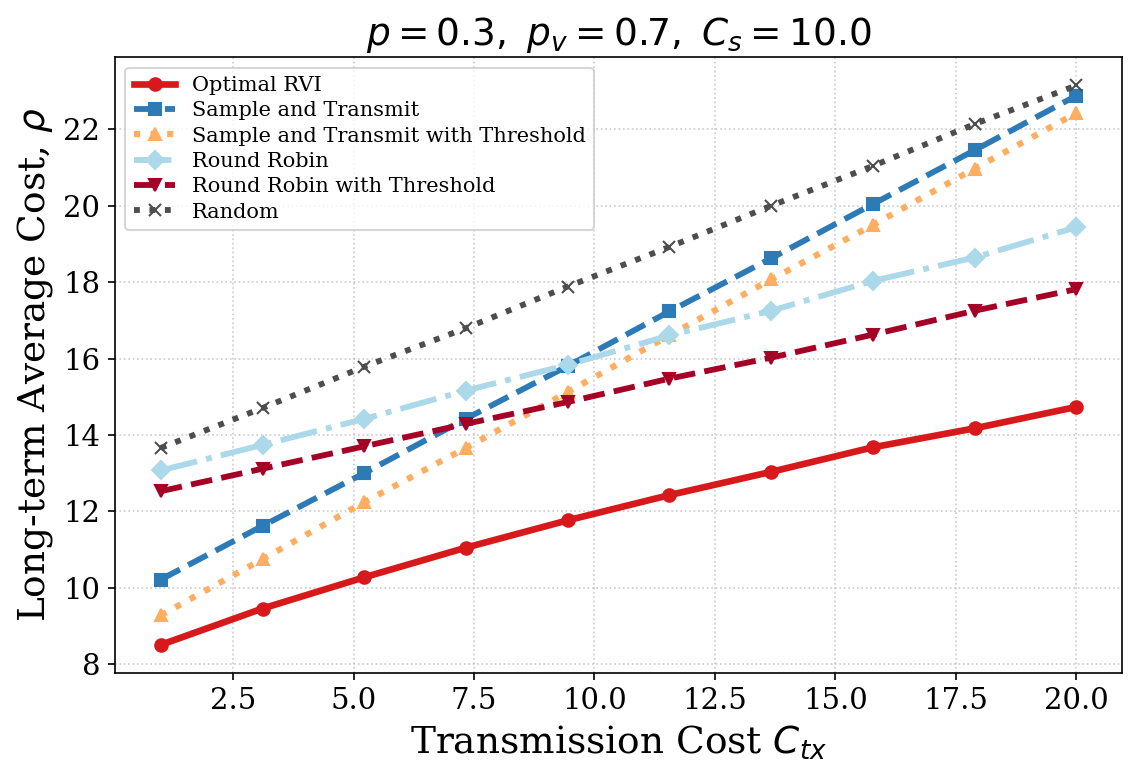}
    \caption{Performance versus transmission cost.}
    \label{fig:Ctx_sweep}
\end{figure}

%%%%%%%%%%%%%%%%%%%%%%%%%%%%%%%%
% Numerical Results %
%%%%%%%%%%%%%%%%%%%%%%%%%%%%%%%%
\section{Numerical Analysis and Policy Visualization}
\label{sec:numerical_results}

We evaluate the optimal joint sampling and transmission policy obtained using
RVI and compare it with five baselines: Sample and Transmit, Sample and
Transmit with Threshold, Round Robin, Round Robin with Threshold, and Random.
Unless varied, the parameters are specified in the corresponding figure. Fig.~\ref{fig:3D Policy} visualizes the optimal policy over
$(v_{tx},v_1,v_2)$, while Fig.~\ref{fig:2D Policy} shows fixed-$v_{tx}$
slices. When $v_{tx}=0$, direct transmission and idling occupy most of the
state space since the transmitter already has the current source version.
The transmission regions are symmetric about $v_1=v_2$, and the older
receiver is selected whenever direct transmission is optimal, consistent
with Lemma~\ref{lem:serve_older}. As $v_{tx}$ increases, the sampling
region expands and eventually dominates the feasible state space, illustrating
the sampling-threshold structure in Theorem~\ref{thm:sampling_threshold}. Fig.~\ref{fig:p_sweep} shows that increasing the direct-link reliability $p$
reduces the long-term average cost, where the optimal policy adapts its transmission decisions to the link quality.
Similarly, Fig.~\ref{fig:pv_sweep} shows that increasing gossip reliability
$p_v$ improves performance by making receiver-to-receiver information dissemination more
effective. These observations are consistent with
Theorem~\ref{thm:age_difference_threshold}, which shows that the relative
reliabilities of gossip and direct transmission determine the structure of
the transmit-versus-idle decision. Fig.~\ref{fig:Ctx_sweep} shows that increasing the transmission cost
$C_{\mathrm{tx}}$ makes direct transmission
less attractive, increasing reliance on gossip. Finally, Fig.~\ref{fig:Cs_sweep} shows that, as $C_s$ increases sampling becomes
less attractive, causing the optimal policy to delay source acquisition until
the freshness benefit justifies the sampling cost. The optimal RVI policy
consistently achieves the lowest long-term average cost among the considered
policies across the parameter sweeps.

\begin{figure}[t]
    \centering
    \includegraphics[width=0.8\linewidth]{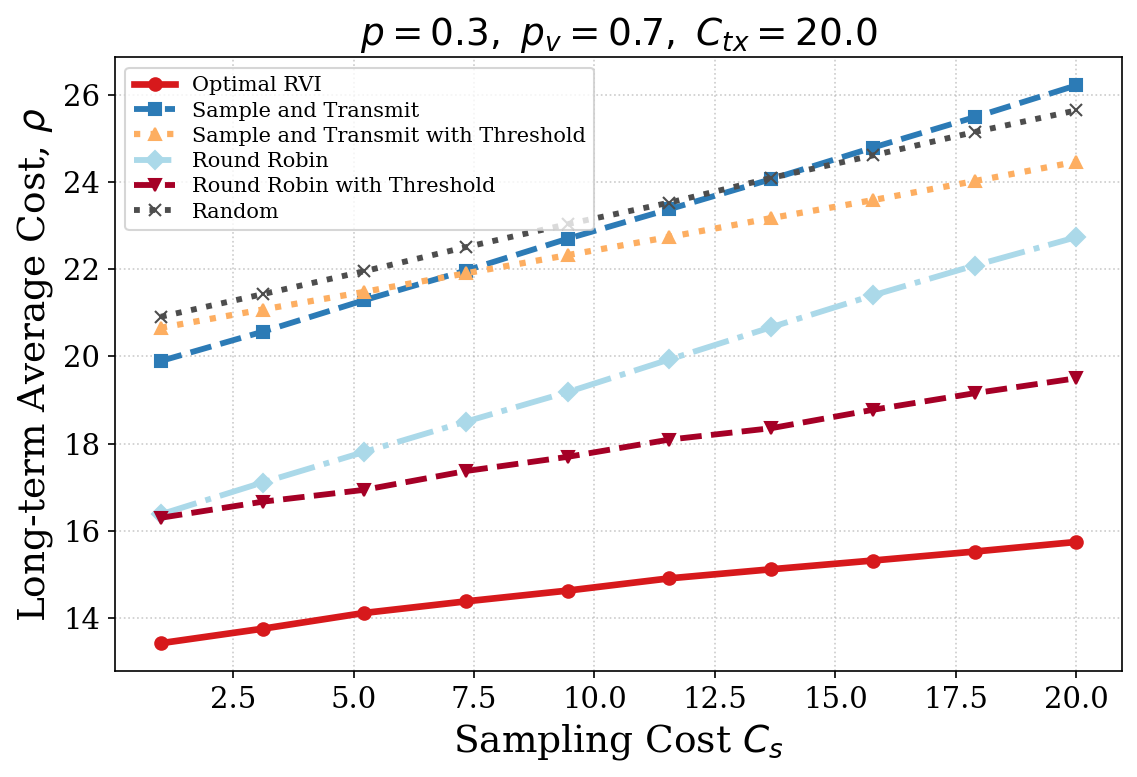} 
    \caption{Performance versus sampling cost.}
     \label{fig:Cs_sweep}
\end{figure}
\section{Conclusion}

We studied joint sampling, transmission, and gossiping for semantic freshness
under VAoI and formulated the problem as an infinite-horizon average-cost
MDP. We showed that the sampling decision has a threshold structure in the
transmitter VAoI and that, among direct transmissions, it is optimal to serve
the older receiver. We further characterized the remaining decision through the receiver VAoI imbalance. In
particular, increasing receiver imbalance can shift the optimal decision from
transmission to idling when $p_v>p$, from idling to transmission when
$p_v<p$, while it has no effect when $p_v=p$.
Finally, when the transmitter and the fresher receiver hold the
same version and gossip is at least as reliable as the direct link, direct
transmission is strictly suboptimal. Numerical results demonstrate the
resulting performance gain of jointly adapting all three decisions. Future work includes extensions
to larger and general gossip networks.

\bibliographystyle{unsrt} 
\bibliography{references}

\end{document}